\pgfplotsset{compat=1.14} 
\newcommand{\specialcell}[2][c]{%
  \begin{tabular}[#1]{@{}c@{}}#2\end{tabular}}
\begin{document}
\bstctlcite{IEEEexample:BSTcontrol} 

\title{Hyperprofile-based Computation Offloading for Mobile Edge Networks}


\author{\IEEEauthorblockN{Andrew Crutcher\IEEEauthorrefmark{1}, Caleb Koch\IEEEauthorrefmark{2}, Kyle Coleman\IEEEauthorrefmark{3}, Jon Patman\IEEEauthorrefmark{4}, Flavio Esposito\IEEEauthorrefmark{3}, Prasad Calyam\IEEEauthorrefmark{4}}
\IEEEauthorblockA{\IEEEauthorrefmark{1}Southeast Missouri State University, alcrutcher1s@semo.edu}
\IEEEauthorblockA{\IEEEauthorrefmark{2}Cornell University, cak247@cornell.edu}
\IEEEauthorblockA{\IEEEauthorrefmark{3}Saint Louis University, 
\{kylecoleman, espositof\}@slu.edu}
\IEEEauthorblockA{\IEEEauthorrefmark{4}University of Missouri, jpxrc@mail.missouri.edu; calyamp@missouri.edu}
}

\renewcommand\IEEEkeywordsname{Keywords}

\maketitle
\thispagestyle{empty}
\pagestyle{empty}

\begin{abstract}
In recent studies, researchers have developed various computation offloading frameworks for bringing cloud services closer to the user via edge networks.  Specifically, an edge device needs to offload computationally intensive tasks because of energy and processing constraints. These constraints present the challenge of identifying which edge nodes should receive tasks to reduce overall resource consumption. We propose a unique solution to this problem which incorporates elements from Knowledge-Defined Networking (KDN) to make intelligent predictions about offloading costs based on historical data. Each server instance can be represented in a multidimensional feature space where each dimension corresponds to a predicted metric. We compute features for a ``hyperprofile'' and position nodes based on the predicted costs of offloading a particular task. We then perform a $k$-Nearest Neighbor ($k$NN) query within the hyperprofile to select nodes for offloading computation. This paper formalizes our hyperprofile-based solution and explores the viability of using machine learning (ML) techniques to predict metrics useful for computation offloading. We also investigate the effects of using different distance metrics for the queries. Our results show various network metrics can be modeled accurately with regression, and there are circumstances where $k$NN queries using Euclidean distance as opposed to rectilinear distance is more favorable. 

\end{abstract}

\begin{IEEEkeywords}
Knowledge-defined networking, machine learning, computation offloading, mobile edge networks, $k$-Nearest Neighbor
\end{IEEEkeywords}

\theoremstyle{plain}
\newtheorem{theorem}{Theorem}
\newtheorem{proposition}{Proposition}
\newtheorem{lemma}{Lemma}
\newtheorem*{corollary}{Corollary}

\theoremstyle{definition}
\newtheorem{definition}{Definition}

\section{Introduction}
\label{sec:intro}

Internet of Things (IoT) technologies introduce the need for energy-aware and latency-sensitive management strategies to ensure reliable performance in resource constrained environments. One such situation is disaster incidents where first responders may be operating in areas with limited network and computing resources. Disaster response teams may also benefit from having sensor and visual data processed on site by utilizing computation offloading strategies to make optimal decisions based on energy or latency requirements of the user.

The computing environment of disaster response networks is similar to general edge computing networks in that we have pervasive computing infrastructure with multi-modal, multi-dimensional, and geospatially dispersed data sources that rely on a wide range of services (e.g. pedestrian tracking, facial recognition, location services)~\cite{vcc}. Typically, the main challenge of edge computing is concerned with how to execute these services on resource constrained devices such as mobile phones or other IoT devices. 

A popular and well-studied resolution to this challenge is computation offloading where resource intensive tasks are migrated to nearby cloud or edge servers with abundant computing resources. This is necessary because mobile devices are limited in terms of battery life, wireless communication, and computing capabilities~\cite{off}. Computation offloading is ideal because it typically results in lower energy consumption and processing time for the mobile user~\cite{iot}. Broadly speaking, computation offloading can offset the limitations of resource constrained mobile devices thereby offering a greater variety of services to the user~\cite{mal}. 


The control mechanism for manging computation offloading has been a popular research topic and several offloading frameworks have been proposed~\cite{maui}~\cite{clonecloud}. Hence there is a desire to develop an intelligent, runtime offloading scheme~\cite{ada} to make decisions regarding when and how to offload. A new emerging paradigm known as Knowledge-Defined Networking (KDN) relies on Network Analytics (NA) and Software-Defined Networking (SDN) to efficiently learn stateful information about a network~\cite{kdn}. KDN makes use of NA to build a high-level model of the system known as the \textit{knowledge plane}~\cite{kp}.

Given the heterogeneous nature of edge networks, we can employ machine learning (ML) techniques to understand relationships between relevant variables that other analytical approaches may fail to capture. However, ML is only feasible if accurate training data is available. Traditionally, this is an issue as networks are inherently distributed systems and nodes have limited view and control of the network. However due to the development of SDN, the control and data planes can be decoupled which allows for a logically centralized control plane. SDN offers not only control of the network but also the ability to collect training data for the ML model.

In this paper, we aim to study the benefits of using KDN concepts to guide the design of an intelligent computation offloading framework. By intelligent we imply that our framework uses historical data to build a predictive model that can encode system and user dynamics that other deterministic heuristics may fail to capture. We design various network simulations in ns-3 in order to create a robust dataset that is used to train an ML model. We account for mobility by varying distances between the user and access points. The predictions from the model can then be used as input features in a multi-dimensional space we call the \textit{hyperprofile}. 

The hyperprofile consists of a set of nodes that correspond to physical machines which are positioned based on the predicted performance of that server for a given task. The user is represented in the feature space such that a query on the user's representation returns a set of nodes to which we offload the task. Our results indicate that the query method can play an integral role in scheduling tasks to offload to nodes. We provide a mathematical basis for why the Euclidean distance metric tends to favor nodes with a balanced trade-off between features.    

The rest of the paper is organized as follows: Section \ref{sec:related} discusses previous work on computation offloading frameworks. In Section \ref{sec:prob} we formalize the offloading problem, Section \ref{sec:KDN} details our KDN-based model for selecting optimal nodes for offloading. Different query methods are discussed in Section \ref{sec:perfEval}, and Section \ref{sec:conclusion} concludes the paper.

\section{Related Work}
\label{sec:related}
\subsection{Computation Offloading in Mobile Edge Networks}
Mobile Edge Computing (MEC) can provide an energy-aware solution to computation offloading for IoT devices where energy conservation is more desirable than low-latency~\cite{enr}. The mobile users could take photographs of victim's faces and then perform facial recognition to identify the victim. The computation involved for facial recognition happens on either a core cloud or an edge computer near the user. However, many offloading frameworks only account for having one edge computer which limits the user in terms of mobility~\cite{enr}~\cite{maui}. They also rely on the user to decide what offloading strategy would be best for their situation. We posit that such decisions are most optimally handled by an ML model rather than someone unfamiliar with the structure and operation of networks (e.g. a first responder).

There are a number of existing offloading strategies. Yang \textit{et al}. studied the problem under multiple mobile device users sharing a common wireless connection to the cloud. Their solution uses a genetic heuristic algorithm and focuses on \textit{code partitioning} and deciding whether to offload each partition individually. Researchers in~\cite{iot} formulate the offloading problem as a multiple choice knapsack problem where one is trying to maximize bandwidth utilization subject to constraints such as battery life. An optimization approach is also taken in \cite{lag}, and their solution utilizes Lagrange multipliers.

It is difficult to adapt these solutions in real-time when some metrics may be unavailable or when the user needs change. A key feature of our solution is that it can be adapted based on the metrics available. This adaptability is achieved by effectively \textit{decoupling} the process of developing a network model (discussed in Section \ref{sec:KDN}) and incorporating the user into that model to suit his/her needs best (discussed in Section \ref{sec:perfEval}). 

\subsection{Knowledge--Defined Networking}

Mestres \textit{et al}. note in \cite{kdn} that ML models could work well with managing network behavior if the training data adequately represents the network itself. However, the authors remark it is unclear what constitutes representative training data in networking. This is left as an open research problem. On a fundamental level, our paper is motivated by the question of \textit{what characteristics of a network are relevant for developing ML models?} We believe our approach to this problem is unique because (1) we focus on the performance of \textit{feature selection} to develop a hyperprofile space and (2) then we apply $k$NN to find optimal destination nodes for offloading. We describe each formulation in depth in future sections (namely sections \ref{subsec:predictions} and \ref{subsec:kNN}).

\section{Problem Formulation}
\label{sec:prob}
Coordinating how computation is offloaded to edge servers can be seen as a job shop scheduling problem which is a popular problem in computer science literature \cite{jobScheduling2}. Job shop scheduling is concerned with optimally scheduling a set of jobs on machines with varying processing constraints. It minimizes some objective such as energy consumption or makespan (total time to process all jobs) \cite{jobScheduling1}. 


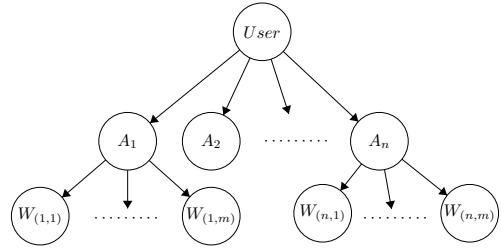
\begin{figure}[h]
\begin{center}
\resizebox{0.75\columnwidth}{!}{
\begin{tikzpicture}[scale=0.2]
\tikzstyle{every node}+=[inner sep=0pt]
\draw [black] (37.4,-15.6) circle (3);
\draw (37.4,-15.6) node {$User$};
\draw [black] (23.4,-27) circle (3);
\draw (23.4,-27) node {$A_1$};
\draw [black] (32.1,-27) circle (3);
\draw (32.1,-27) node {$A_2$};
\draw [black] (49.6,-27) circle (3);
\draw (49.6,-27) node {$A_n$};
\draw [black] (14.3,-34.8) circle (3);
\draw (14.3,-34.8) node {$W_{(1,1)}$};
\draw [black] (32.1,-34.8) circle (3);
\draw (32.1,-34.8) node {$W_{(1,m)}$};
\draw [black] (43.7,-34.5) circle (3);
\draw (43.7,-34.5) node {$W_{(n,1)}$};
\draw [black] (59,-34.5) circle (3);
\draw (59,-34.5) node {$W_{(n,m)}$};
\draw (41.1,-27) node {$\cdots\cdots\cdots$};
\draw (23.4,-35) node {$\cdots\cdots\cdots$};
\draw (51.4,-35) node {$\cdots\cdots\cdots$};
\draw [black] (21.12,-28.95) -- (16.58,-32.85);
\fill [black] (16.58,-32.85) -- (17.51,-32.71) -- (16.86,-31.95);
\draw [black] (25.63,-29) -- (29.87,-32.8);
\fill [black] (29.87,-32.8) -- (29.6,-31.89) -- (28.94,-32.64);
\draw [black] (35.07,-17.49) -- (25.73,-25.11);
\fill [black] (25.73,-25.11) -- (26.66,-24.99) -- (26.03,-24.21);
\draw [black] (36.14,-18.32) -- (33.36,-24.28);
\fill [black] (33.36,-24.28) -- (34.16,-23.76) -- (33.25,-23.34);
\draw [black] (39.59,-17.65) -- (47.41,-24.95);
\fill [black] (47.41,-24.95) -- (47.16,-24.04) -- (46.48,-24.77);
\draw [black] (47.75,-29.36) -- (45.55,-32.14);
\fill [black] (45.55,-32.14) -- (46.44,-31.82) -- (45.66,-31.2);
\draw [black] (51.95,-28.87) -- (56.65,-32.63);
\fill [black] (56.65,-32.63) -- (56.34,-31.74) -- (55.72,-32.52);
\draw [black] (38.33,-18.45) -- (40.17,-24.15);
\fill [black] (40.17,-24.15) -- (40.4,-23.23) -- (39.45,-23.54);
\draw [black] (23.4,-30) -- (23.4,-34);
\fill [black] (23.4,-34) -- (23.9,-33.2) -- (22.9,-33.2);
\draw [black] (50.13,-29.95) -- (50.87,-34.05);
\fill [black] (50.87,-34.05) -- (51.22,-33.17) -- (50.23,-33.35);
\end{tikzpicture}
}
\end{center}
\caption{\footnotesize{Job scheduling problem represented as a partition/aggregate application structure}}
\label{aggWork}
\end{figure}

We can further extend the concept of relating computation offloading to a job shop scheduling problem by viewing the task of offloading from the mobile user's perspective. This is achieved by representing the job scheduling problem as a directed, acyclic graph. Figure \ref{aggWork} shows one such depiction where the root node is the user who wishes to offload some computational tasks to available vertices where the edges between vertices can be weighted to represent the energy or latency cost associated with selecting that vertex. 

We adopt the popular partition/aggregate application structure which consists of a distributed set of aggregate and worker nodes~\cite{aggWorker}. Depending on the task partitioning, a user can forward data to the nearest aggregator that then schedules which worker nodes receive which jobs. Each aggregate node $A_i$ can be seen as an independent job shop that receives a set of jobs $J$ and in turn schedules them to be processed on a set of available worker machines $W$. In this paper, we focus on the first level of offloading from the user to the aggregator. 


\section{KDN Based Offloading Framework}
\label{sec:KDN}
An overview of our solution framework is illustrated in Figure \ref{fig:KDN_CO_Model}. The first step to our solution involves collecting network features for training an ML model. This part is important for developing the hyperprofile. 
\begin{figure}[H]
\vspace{-3mm}
\centering
\resizebox{\columnwidth}{!}{
\includegraphics[width=0.9\textwidth]{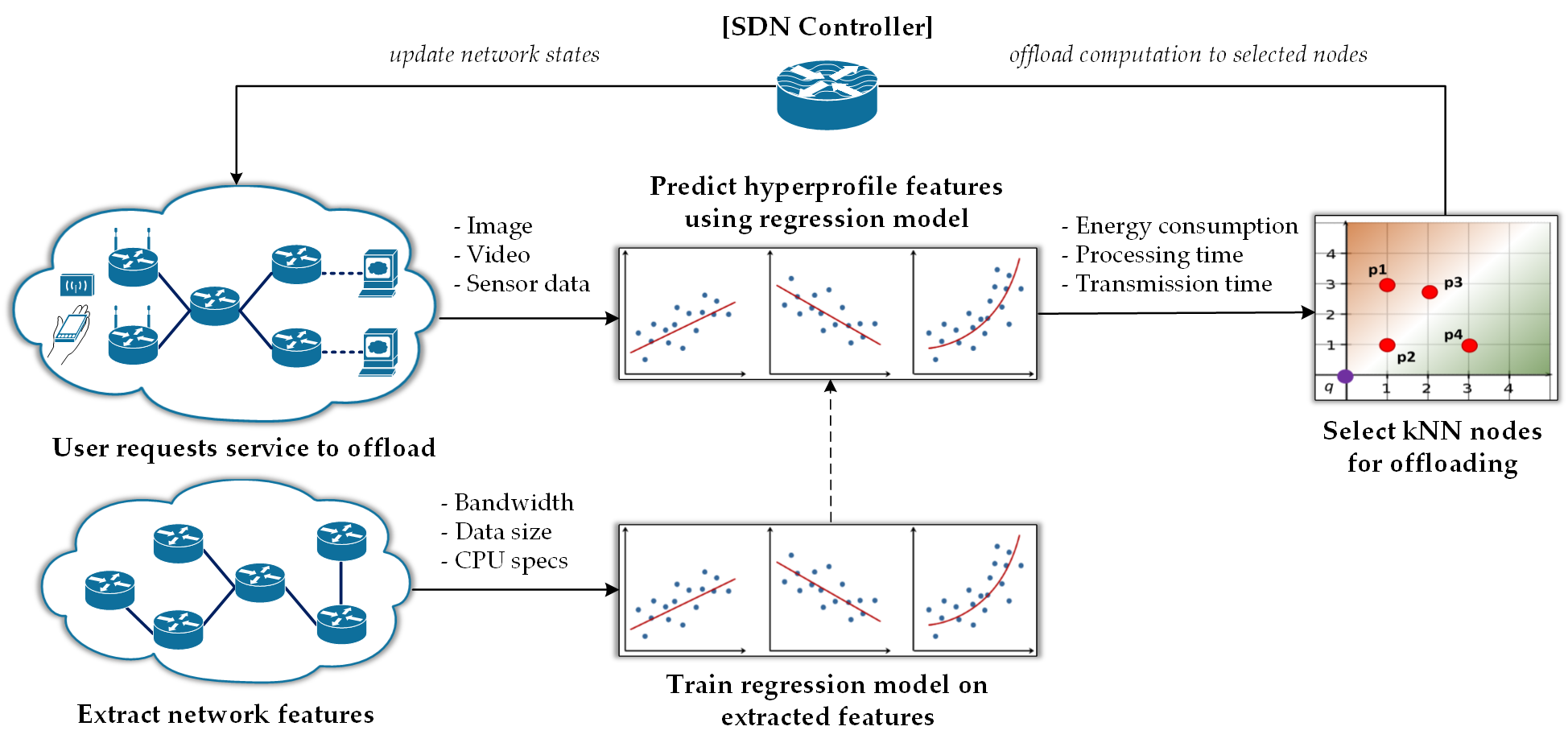}
}
\caption{\footnotesize{
Overview of our proposed hyperprofile-based computation offloading framework.  Temporally, the process begins with training a model. Then, as the user makes requests for offloading the model is applied to develop a hyperprofile within which $k$NN queries can be made to select nodes. The SDN controller servers to initiate the offloading process
}}
\label{fig:KDN_CO_Model}
\vspace{-3mm}
\end{figure}

\subsection{Data Collection}
We ran multiple simulations using ns-3 \cite{ns3}, a discrete-event network simulator that is available for research. Specifically, we ran our simulations using version 3.26 of ns-3 on Ubuntu 16.04 in parallel using GNU parallel \cite{Tange2011a}. We chose ns-3 as our network simulator for its tracing subsystem, energy framework for Wi-Fi devices, and ease of running the same simulation with modified program parameters. These capabilities allow us to generate training data. We simulated sending data between a wireless device and access point while varying bandwidth and total data sent. We measured energy consumed by the wireless device and the time that passed from the first packet being sent by the source to the last packet being received.

The simulations provide a basis for developing the hyperprofiles for the servers. Our goal is to predict energy consumption and transmission time from bandwidth and data size. Figure \ref{fig:3d_2} depicts the relationship between energy consumption, transmission time, bandwidth, and data size as a scatter plot of the raw data.

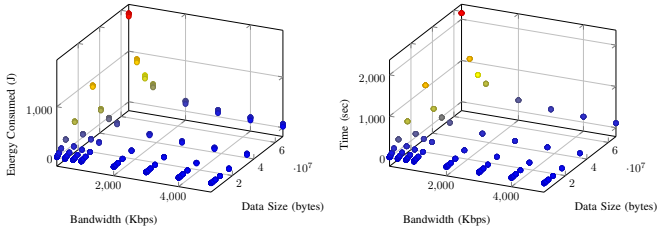
\begin{figure}
    \centering
    \begin{minipage}{0.5\columnwidth}
        \centering
        \resizebox{\columnwidth}{!}{
            \begin{tikzpicture}
            \begin{axis} [
              grid = major,
              xlabel={Bandwidth (Kbps)},
              ylabel = {Data Size (bytes)},
              zlabel = {Energy Consumed (J)},
            ]
                \addplot3+[only marks,scatter] table[y={DataSize}, x={Bandwidth} , z={EnergyConsumed}, col sep=comma] {LatexData3d.txt};
            \end{axis}
            \end{tikzpicture}
        }
    \end{minipage}\hfill
    \begin{minipage}{0.5\columnwidth}
        \centering
        \resizebox{\columnwidth}{!}{
            \begin{tikzpicture}
            \begin{axis} [
              grid = major,
              xlabel={Bandwidth (Kbps)},
              ylabel = {Data Size (bytes)},
              zlabel = {Time (sec)},
            ]
                \addplot3+[only marks,scatter] table[y={DataSize}, x={Bandwidth} , z={TimeSending}, col sep=comma] {Lat3dDataTimeSending.csv};
            \end{axis}
            \end{tikzpicture}
           }
    \end{minipage}
   \caption{\footnotesize{Three-dimensional plot of the energy consumption and time sending data gathered from the simulations using ns-3 where the amount of data to send and bandwidth of the connection was varied across simulations. The key point is that there is an exponential relationship between energy/time and bandwidth for a fixed data size while there is a linear relationship between energy/time and data size for a fixed bandwidth.}}
   \label{fig:3d_2}
\end{figure}

\subsection{Predictive Analytics}
\label{subsec:predictions}
We found a multistep regression was the most appropriate approach to achieving our desired prediction. Specifically, we use bandwidth to predict the linear regression line that models energy consumption and data size. Our results show that the slope of such lines are exponential with respect to bandwidth. The same relationships apply to predicting time, as depicted in Figure \ref{fig:3d_2}.

Formally, we can represent the predicted variable as a linear function $$f_b(d_s) = m(b) d_s + c(b)$$ where the slope $m(b)$ and the y-intercept $c(b)$ are functions of some bandwidth $b$. Our results show that $m$ is exponential and can be defined accurately from historical data. Note that for energy consumption $c(b)=0$ because the amount of energy consumed when sending 0 bytes of data will always be 0, regardless of bandwidth.


We perform linear regression for each fixed value of bandwidth between energy consumption and data size. In Figure \ref{fig:3d_2} these lines can be seen by connecting scatter points along a particular bandwidth value. We predict the slope of these lines using bandwidth values. The results of these predictions are shown in Table \ref{tab:accur}. The regression resulted in various curves of best fits which are given explicitly in the table.


Table \ref{tab:accur} also reports the $R^2$ value of the models used to compute $m$ and $c$. It also depicts the $k$-fold cross-validation score for $k=10$. This score validates that for a fixed bandwidth the relationship between the predicted value and the data size can be modeled with a linear function.

\begin{table}[H]
\centering
\bgroup
\def\arraystretch{1.5} 
\resizebox{0.75\columnwidth}{!}{
\begin{tabular}{|l||c|c|}\hline
 &\makebox{Energy Consumption ($e_c$)}&\makebox{Time ($t$)}\\\hline\hline
Bandwidth ($b$)   & \specialcell{$m_1=0.015b^{-1.13}$\\ $R^2=0.997$} & \specialcell{$m_2=\nicefrac{8.04\cdot 10^6}{b}$\\ $R^2=1$\\$c=222873e^{0.0004b}$\\$R^2=0.918$}\\\hline
Data Size ($d_s$) & \specialcell{$e_c = m_1d_s$\\Cross-validation: 0.99}      & \specialcell{$t = m_2d_s+c$\\Cross-validation: 0.99}\\\hline
\end{tabular}
}
\egroup
\caption{\footnotesize{Accuracy of predictions. In all cases, the $R^2$ is greater than $0.9$. The lowest score come from predicting the y-intercept of transmission time; to obtain a higher accuracy we would need to collect more data}}
\label{tab:accur}
\end{table}

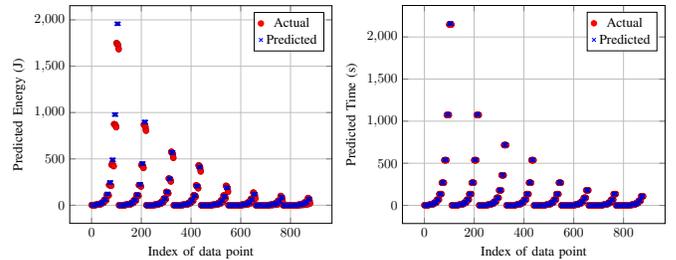
\begin{figure}[h]
\centering
    \begin{minipage}{0.5\columnwidth}
        \centering
        \resizebox{\columnwidth}{!}{
        \begin{tikzpicture}
        \begin{axis}[
                xlabel=Index of data point, 
                ylabel=Predicted Energy (J), 
                grid=major,
                legend pos = north east,
            ]
            \addplot[
            scatter=true,
            only marks,
            point meta = 2,
            color=red,
            ] table[x=Index, y=EnergyActual, col sep=comma] {predictionResultsEnergy.csv};
            \addlegendentry{Actual}
            \addplot[
              scatter=true,
              only marks,
              point meta = 1,
              color=blue,
              mark=x,
            ] table[x={Index}, y=EnergyPredicted, col sep=comma] {predictionResultsEnergy.csv};
            \addlegendentry{Predicted}
            \end{axis}
        \end{tikzpicture}
        }
    \end{minipage}\hfill
    \begin{minipage}{0.5\columnwidth}
        \centering
        \resizebox{\columnwidth}{!}{
    \begin{tikzpicture}
    \begin{axis}[
            xlabel=Index of data point, 
            ylabel=Predicted Time (s), 
            grid=major,
            legend pos = north east,
        ]
        \addplot[
        scatter=true,
        only marks,
        point meta = 2,
        color=red,
        ] table[x=Index, y=TimeActual, col sep=comma] {predictionResultsTimeSec.csv};
        \addlegendentry{Actual}
        \addplot[
          scatter=true,
          only marks,
          point meta = 1,
          color=blue,
          mark=x,
        ] table[x={Index}, y=TimePrediction, col sep=comma] {predictionResultsTimeSec.csv};
        \addlegendentry{Predicted}
        \end{axis}
\end{tikzpicture}
}
\end{minipage}
\caption{\footnotesize{Analysis of our prediction models. Each point on the $x$-axis represents a trial of a specific bandwidth and data size. The trials also varied physical distance, and even though our model did not account for distance, it still performed well -- as shown by the overlap between the actual points and the predicted points.}}
\label{fig:predModel}
\end{figure}

In another set of simulations, we varied the physical distance between nodes from 10 m to 100 m. We avoided larger distances because we wanted to focus on scenarios where packet loss is not part of the issue of offloading. Testing our prediction model on this data shows that the variation of distances contributes an insignificant amount of error. The results of our predictions are given in Figure \ref{fig:predModel} where the x-axis plots the index of the data point and the y-axis represents dependent variable (both predicted and actual). The data clusters naturally into groups based on bandwidth, the first group being from index 0 to $\sim 100$. The energy consumption/transmission time increase within the groups because data size is increasing. Notice that the error is worst for smaller indices. This shows that the error is worst for small bandwidths and large data sizes. This observation could be attributed to the way error compounds with larger transmission times (since the largest transmission time occurs with a small bandwidth and a large data size). The main point is that in the vast majority of cases, we can almost exactly predict energy consumption and transmission time.

\subsection{Hyperprofiles for Efficient Offloading}
Our regression analysis shows that from historical data, we can develop accurate models of network features. A natural question is how we can leverage such models to make intelligent offloading decisions. Given that we are using ML to predict network metrics specific to a server, an intuitive answer to this question involves representing the available servers in a ``feature space'' where each dimension of the space is a modeled metric (e.g. energy consumption). The user device (or aggregator) can then be intelligently places in the feature space such that its position relative to the servers' positions conveys meaningful information. For our metrics, we want to minimize energy consumption and transmission time, so the user will always be represented by the origin. That way, distance from the origin conveys a level of desirability (i.e. the farther a node is from the origin the less desirable it is). This representation is particularly useful because if the device application needs to partition a task into $k$ parts for computation offloading, it can perform a $k$NN query on the origin to get a set of server points in the feature space with the ``optimal'' resources for processing the task.

The metrics or ``profiles'' that represent available servers in the feature space do not necessarily have to be predicted values. They could be specifications of the servers themselves such as processor clock speed. To help distinguish between the various profiles, we introduce the following terminology.

\begin{definition}[Base Profile]
A base profile for an edge network consists of points in a feature space where each point represents a unique server instance, and each dimension of the point represents a deterministic metric. Such metrics may include internal instance specifications (e.g. internal memory), characteristics of the network (e.g. bandwidth), or real-time metrics (e.g. CPU load). 
\end{definition}

\begin{definition}[Hyperprofile]
A hyperprofile is similar to the base profile except each dimension represents a predicted metric. One example of such a metric is the estimated time to receive and transmit a data packet from an external host.
\end{definition}

The main idea behind the different metrics is that they indicate a level of ``fitness'' of each server which can be quantitatively compared with the user device needs and specifications. Different profiles may be appropriate for different tasks and different scenarios. For example, in cases where no historical data is available one may opt to use the base profiles. An interesting direction for future research is to investigate the trade-off between the various profiles and whether one is significantly more useful than the others. It may even be helpful to combine the profiles into a \textit{hybridprofile}. Regardless, the use of profiles is advantageous because it reduces the computation offloading problem to a $k$NN query.

\subsection{Queries on hyperprofile features}
\label{subsec:kNN}

We developed the idea of the hyperprofiles with the intention of performing $k$NN queries on the user device to obtain a set of server instances to which we offload. Nonetheless, it should be noted that different types of queries may employ different search algorithms. Moreover, it need not be the case that every dimension is a metric that we want to minimize. And hence, it may not be the case that the user is always represented by the origin.

Regardless, for our model, $k$NN was the most appropriate algorithm because it returns the points most ``optimal'' relative to the user. Formally, $k$NN returns a set of points for a query point $q$ such that $p\in k\text{NN}(q) \text{ iff } |\{j\in P: d(j, q) < d(p, q)\}| <k$ where $d$ is a predefined distance metric. Often the distance metric is Euclidean which means, in our case, $k$NN($\vec{0}$) returns the points $p_i=(x_i,y_i)$ such that the values of $x_i^2+y_i^2$ are minimal. When $x$ is energy consumption and $y$ is transmission time, offloading to the servers represented by points in $k$NN($\vec{0}$) minimizes energy loss and latency. Other approaches (e.g. Chen \cite{game}) minimize $x+y$ which is effectively the same as performing $k$NN where $d$ is rectilinear distance. It may seem that minimizing one may be the same as minimizing the other, but as Table \ref{tab:distance} depicts, this is not the case. The rest of this section is dedicated to discussing the difference between these two distance metrics.

\begin{table}[H]
\centering
\caption{\footnotesize{Example scenario where a query of one point returns different sets. If using Euclidean distance $k$NN($\vec{0}$) = $\{p_2\}$, whereas if using rectilinear distance, $k$NN($\vec{0}$) = $\{p_1, p_4\}$. Notice that $|x-y|$ is smaller for $p_2$ than for $p_1$. This relationship is explored further by Proposition \ref{prop:diff} (see Appendix)}}
\label{tab:distance}
\bgroup
\def\arraystretch{1.5} 
\begin{tabular}{|l||c|c|}\hline
\backslashbox{Point}{Distance metric}
 &\makebox{Euclidean}&\makebox{Rectilinear}\\\hline\hline
$p_1=(0.219, 0.371)$   & $0.431$ & $0.59$ \\\hline
$p_2 = (0.233, 0.361)$ &  0.429  & 0.594\\\hline
\end{tabular}
\egroup
\end{table}

\section{Different distance metrics for $k$NN queries}
\label{sec:perfEval}

\usetikzlibrary{patterns}
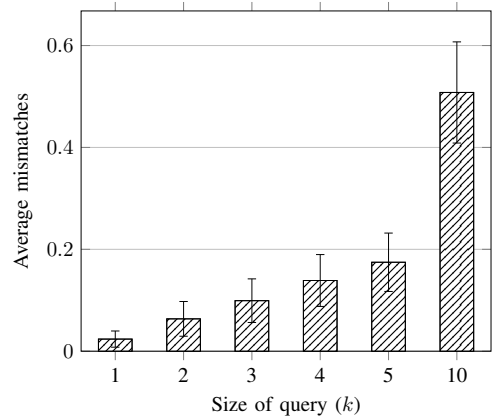
\begin{figure}[h]
\centering
\resizebox{0.75\columnwidth}{!}{
\begin{tikzpicture}
    \begin{axis}[
        ybar=\pgflinewidth,
        bar width=16pt,
        ymajorgrids = true,
        ylabel = {Average mismatches},
        xlabel = {Size of query ($k$)},
        symbolic x coords={1, 2, 3, 4, 5, 10},
        xtick = data,
        ymin=0,
    ]
        \addplot[style={black,fill=white,mark=none},postaction={pattern=north east lines}, error bars/.cd, y dir=both,y explicit]
            coordinates {
            (1, 0.0238) +- (0.0, 0.0159)
            (2,0.0635)  +- (0.0,0.0341)
            (3,0.0992)  +- (0.0, 0.0428)
            (4, 0.1388) +- (0.0, 0.0509)
            (5, 0.1746)  +- (0.0,0.0573)
            (10, 0.5079)  +- (0.0,0.0993)
            };
    \end{axis}
\end{tikzpicture}
}
\caption{\footnotesize{Average mismatches between queries with 95\% confidence intervals. These results come from randomly generated bandwidth and data size values where we varied the size of the datasets and the size of the query (i.e. the value of $k$). We found that larger queries have a higher probability of mismatch}}
\label{fig:mismatches}
\end{figure}

We performed a set of simulations on randomly generated hyperprofiles to evaluate how often a $k$NN query would give different results for different distance metrics. Our simulations consisted of 2 dimensional hyperprofiles ranging in size from 250 points to 5000 points. Each point was computed from randomly generated bandwidth and data size values. Bandwidth ranged from 250 Kbps to 15 Mbps while data size ranged from 60 kB to 250 MB. For each hyperprofile we performed $k$NN queries where $k\in \{1, 2, 3, 4, 5, 10\}$. The queries were performed with both Euclidean and rectilinear metrics and the mismatches between the two methods is shown in Figure \ref{fig:mismatches}. 

Throughout the simulations, we kept track of the points that were mismatched, and in all the cases we noticed that the Euclidean distance queries returned points with minimal differences between $x$ and $y$. In other words, it returned points closer to the line $y=x$. After formalizing our observation, we found this was an inherent property of the mismatched points. Thus, as Proposition \ref{prop:diff} (See Appendix) states, if there is a mismatch in the minimal Euclidean distance and the minimal rectilinear distance between points then the distance between the coordinates of the minimal point based on Euclidean distance is less than the distance between the coordinates of the minimal point based on rectilinear difference.



Notice that the first condition is that the coordinates are nonnegative. Without this condition, a simple counterexample such as $(x_1,y_1)=(3, 0)$ and $(x_2, y_2)=(-3, 1)$ would falsify the proposition. The condition is reasonable since often the hyperprofile dimensions represent nonnegative features of the network or user device such as energy consumption or latency.

Ultimately, the key point of Proposition \ref{prop:diff} is that a Euclidean distance metric will favor points with a more balanced tradeoff between the network features represented by the coordinates. Moreover, the difference becomes more pronounced as $k$ becomes larger. Hence, whether this tradeoff is favorable depends on both the size of the edge network and the types of features with which the user is concerned.

\section{Conclusion}
\label{sec:conclusion}


In this paper, we outlined a framework for computation offloading in disaster network applications by creating a hyperprofile of network edge nodes using metrics such as task processing time. Our solution is unique because it can be easily adapted to different network metrics based on user needs. It decouples the problem of modeling the available network metrics from the problem of identifying user needs while making use of ML models. Effectively, we have described a way of \textit{encoding} available server instances in a multidimensional space to facilitate effective queries. For future work, we first plan to implement a testbed to compare a hyperprofile-based offloading scheme to standard offloading schemes. Another area of interest is expanding this abstract concept of a hyperprofile to other areas such as routing or trust management among others.

\appendix
\begin{proposition}
\label{prop:diff}
If \emph{(1)} $x_1,y_1, x_2, y_2\geq 0$, \emph{(2)} $x_1^2+y_1^2<x_2^2+y_2^2$, and \emph{(3)} $x_2+y_2 < x_1 + y_1$ then $|x_1-y_1| < |x_2-y_2|$.
\end{proposition}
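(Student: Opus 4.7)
The plan is to exploit the algebraic identity $(a+b)^2 + (a-b)^2 = 2(a^2+b^2)$, which cleanly links the Euclidean-squared quantity $x^2+y^2$ to the rectilinear quantity $x+y$ and the ``balance'' quantity $|x-y|$ that we want to bound. This identity is the reason the proposition has any chance of being true in the first place, so it is the natural starting point.

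First I would use the nonnegativity hypothesis (1) to upgrade condition (3) from $x_2+y_2 < x_1+y_1$ to the squared version $(x_2+y_2)^2 < (x_1+y_1)^2$. This step is where hypothesis (1) earns its keep; without it, squaring could reverse the inequality and the whole argument collapses (as the counterexample noted in the paper shows).

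Next I would write the identity for each pair:
\begin{equation*}
(x_i+y_i)^2 + (x_i-y_i)^2 \;=\; 2(x_i^2+y_i^2), \qquad i=1,2,
\end{equation*}
and combine with hypothesis (2) to get
\begin{equation*}
(x_1+y_1)^2 + (x_1-y_1)^2 \;<\; (x_2+y_2)^2 + (x_2-y_2)^2.
\end{equation*}
Then I would isolate the ``imbalance'' terms:
\begin{equation*}
(x_1-y_1)^2 - (x_2-y_2)^2 \;<\; (x_2+y_2)^2 - (x_1+y_1)^2.
\end{equation*}
The right-hand side is strictly negative by the squared version of (3) established in the first step, so $(x_1-y_1)^2 < (x_2-y_2)^2$, and taking the (nonnegative) square root gives $|x_1-y_1| < |x_2-y_2|$ as required.

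The only real obstacle is conceptual rather than computational: one must notice that Euclidean-squared distance, rectilinear distance, and coordinate imbalance are locked together by a single identity, so that two of them being ordered oppositely forces the third into a definite order. Once that observation is made, the algebra is a few lines, and the role of hypothesis (1) is simply to let us pass between $x+y$ and $(x+y)^2$ without sign issues.
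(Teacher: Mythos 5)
Your proof is correct, and it takes a genuinely different route from the paper's. The paper argues by cases: it first rules out $x_1=x_2$ and $y_1=y_2$, assumes WLOG $x_1\geq y_1$, rewrites hypotheses (2) and (3) as factored differences, and then splits on whether $x_2-x_1$ or $y_2-y_1$ is negative, chasing sign information through each branch. You instead observe that the identity $(a+b)^2+(a-b)^2=2(a^2+b^2)$ ties together exactly the three quantities in play --- the squared Euclidean norm, the rectilinear sum, and the imbalance $|a-b|$ --- so that ordering two of them in opposite directions forces the ordering of the third. Your two steps check out: hypothesis (1) guarantees $0\leq x_2+y_2<x_1+y_1$, so squaring (3) is legitimate, and subtracting the two instances of the identity combined with (2) gives $(x_1-y_1)^2<(x_2-y_2)^2$, hence the claim after taking nonnegative square roots. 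What your approach buys is brevity, the elimination of all case analysis and WLOG assumptions, and a transparent accounting of where hypothesis (1) is actually needed; it also explains \emph{why} the proposition is true rather than merely verifying it. What the paper's approach buys is that it requires no clever identity to be spotted --- it is a direct, if longer, manipulation of the hypotheses --- though at the cost of several branches in which sign errors could hide.
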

\begin{proof}
First note that $x_2\not=x_1$ since if they were equal then we would have $y_1^2<y_2^2$ and $y_2<y_1$, a contradiction. The same reasoning implies $y_2\not=y_1$. Now assume without loss of generality that $x_1\geq y_1$. We can write (3) as
$$
(x_2-x_1) + (y_2 - y_1) < 0.
$$
We deal with the problem in cases based on whether $(x_2-x_1)$ or $(y_2 - y_1)$ is negative. For the first case, assume $x_2-x_1<0$. Now write (2) as
$$
(y_1 - y_2)(y_1+y_2) < (x_2 - x_1)(x_2+x_1).
$$
Since $x_2-x_1<0$ and the sums are positive, we must have $y_1 - y_2 <0$. Hence, $y_2-y_1 >0$. Now write (2) as
\begin{equation}
(x_1-x_2)(x_1+x_2) < (y_2-y_1)(y_2+y_1).
\end{equation}
and write (3) as $y_2 - y_1 < x_1 - x_2$. Combining these we get
$$
(x_1-x_2)(x_1+x_2) < (x_1 - x_2)(y_2+y_1).
$$
Since $x_1 - x_2$ is positive we can divide it out to get $x_1 + x_2 < y_2 + y_1$ which we rearranging to get,
$$
|x_1 - y_1| = x_1 - y_1 < y_2 - x_2 \leq |y_2 - x_2|
$$
by our original assumption that $x_1 \geq y_1$. Now for case (2) assume that $(y_2 - y_1) < 0$. From Equation (1), this means $x_1 - x_2 <0$. Now combining these with our assumption that $x_1 \geq y_1$ we have $x_2 > x_1 \geq y_1 > y_2>0.$ From this, We can write
$$
|x_2-y_2|\geq x_2 - y_2 > x_1 - y_1 = |x_1 - y_1|.
$$

\end{proof}

\section*{Acknowledgements}
This material is based upon work supported by the National Science Foundation under Award Number: CNS-1659134 [REU Site], CNS-1647182, and CNS-1647084. Any opinions, findings, and conclusions or recommendations expressed in this publication are those of the author(s) and do not necessarily reflect the views of the National Science Foundation.

\bibliographystyle{IEEEtran}
\bibliography{bibitex}
\end{document}